\newcommand{\eat}[1]{}
\newtheorem{theorem}{Theorem}
\newtheorem{definition}{Definition}
\newtheorem{remark}{Remark}
\newtheorem{problem}{Problem}
\newcommand{\Rmnum}[1]{\expandafter\@slowromancap\romannumeral #1@}
\let\NAT@parse\undefined
\title{\LARGE \bf
A Barrier-Certified Optimal Coordination Framework for Connected and Automated Vehicles}
\author{Behdad Chalaki, \emph{IEEE Student Member}, Andreas A. Malikopoulos, \emph{IEEE Senior Member}%
\thanks{This research was supported by ARPAE's NEXTCAR program under the award number DE-AR0000796. This support is gratefully acknowledged.}
\thanks{The authors are with the Department of Mechanical Engineering, University of Delaware, Newark, DE 19716 USA (emails: \texttt{\{bchalaki;andreas\}@udel.edu}).}}
\begin{document}

\maketitle
\thispagestyle{empty}
\pagestyle{empty}

\begin{abstract} 
In this paper, we extend a framework that we developed earlier for coordination of connected and automated vehicles (CAVs) at a signal-free intersection by integrating a safety layer using control barrier functions. First, in our motion planning module, each CAV computes the optimal control trajectory using simple vehicle dynamics. The trajectory does not make any of the state, control, and safety constraints active. A vehicle-level tracking controller employs a combined feedforward-feedback control law to track the resulting optimal trajectory from the motion planning module. Then, a barrier-certificate module, acting as a middle layer between the vehicle-level tracking controller and physical vehicle, receives the control law from the vehicle-level tracking controller and using realistic vehicle dynamics ensures that none of the state, control, and safety constraints becomes active. The latter is achieved through a quadratic program, which can be solved efficiently in real time. We demonstrate the effectiveness of our extended framework through a numerical simulation.
\end{abstract}

\section{Introduction}
\PARstart{R}{ecent} advancement in the communication technologies and computational capabilities have been paving the way to employ fleets of connected and automated vehicles (CAVs) in transportation networks to address concerns such as safety and traffic congestion \cite{Wadud2016}. 
The influential work of Athans \cite{Athans1969} on safely coordinating CAVs at merging roadways generated significant interest in this area. Several research efforts since then have considered a two-level optimization framework.
This framework includes an \emph{upper-level} optimization that yields, for each CAV, the optimal time to exit the control zone combined with a \emph{low-level} optimization that yields for the CAV the optimal control input (acceleration/deceleration) to achieve the optimal time derived in the upper level subject to the state and control constraints. 
There have been several approaches in the literature to solve the upper-level optimization problem, including first-in-first-out (FIFO) queuing policy, heuristic  Monte Carlo tree search methods \cite{xu2019cooperative,xu2020bi}, centralized optimization techniques \cite{guney2020scheduling,hult2018optimal}, and job-shop scheduling \cite{chalaki2020TITS, fayazi2018mixed}.
Given the solution of the upper-level optimization problem, a constrained optimal control problem is solved sequentially in the low-level optimization providing the optimal control input for each CAV. 
To address the low-level optimization problem, research efforts have  used  optimal control techniques  \cite{chalaki2020TCST,mahbub2020decentralized,Malikopoulos2020,zhang2019decentralized,Kumaravel:2021wi}, which yield closed-form solutions, and model predictive control \cite{hult2018optimal,kim2014mpc,campos2014cooperative,kloock2019distributed}.  

Other approaches in the literature have explored the idea of employing control barrier functions (CBF) to ensure the satisfaction of constraints in a safety-critical system. The CBF approach handles constraints by rendering the safe sets forward invariant, which means that if the system initially starts in the safe set, it will stay in the safe set \cite{ames2016control, ames2014control}. Ames et al. \cite{ames2016control} presented a framework to unify safety constraints along with performance objectives of safety-critical system with affine control using CBFs and the control Lyapunov functions (CLFs), respectively. Under reasonable assumptions, they proved that CBF provides a necessary and sufficient condition on the forward invariance of a safe set. They demonstrated the performance of their approach on automotive applications such as adaptive cruise control and lane keeping. A comprehensive discussion of the recent effort on CBFs and their use to verify and enforce safety in the context of safety-critical controllers is provided in \cite{ames2019control}.

More recently, there have been a series of papers initially proposed by Xiao et al. \cite{xiao2021bridging,xiao2019decentralized} on using CBFs in the coordination of CAVs \cite{xiao2019decentralized,xiao2021bridging,rodriguez2022vehicle,khaled2020intersection,katriniok2021control}. Xiao et al. \cite{xiao2019decentralized}, provided a joint CBF  and CLF approach to respond to inevitable perturbation and noise in a highway merging problem. 
The authors transformed the state and control constraints of the system into the corresponding CBF constraints and solved a quadratic program (QP) at each time step. Focusing on the highway merging problem in \cite{xiao2021bridging}, the authors presented their two-step approach. First, using linearized dynamic and quadratic costs, they derived the unconstrained solution to the optimal control problem. Next, by formulating a QP at each time step, they tracked the optimal control trajectory using CLF and ensured the satisfaction of the constraints through CBF constraints. Considering an intersection scenario, Khaled et al. \cite{khaled2020intersection} applied the formulation in \cite{xiao2019decentralized} to a signal-free intersection, while Rodriguez and Fathi  \cite{rodriguez2022vehicle} employed the two-step formulation in \cite{xiao2021bridging} to an intersection with traffic lights.

In this paper, we build upon the framework introduced in \cite{Malikopoulos2020} consisting of a single optimization level aimed at both minimizing energy consumption and improving the traffic throughput. Utilizing the proposed framework, each CAV computes the optimal  unconstrained control trajectory without activating any of the state, control, and safety constraints. One direct benefit of this framework is that it avoids the inherent  implementation challenges in solving a constrained optimal control problem in real time. In this framework, we have considered that there exists a vehicle-level controller which can perfectly track the optimal unconstrained control trajectory. However, for cases where deviations  between the actual trajectory and the planned trajectory exist, some constraints of the system may become active. To address this issue, one approach is to employ a replanning mechanism \cite{chalaki2021Reseq} which introduces an indirect feedback in the system. Another approach is to consider learning these deviations and uncertainties online  \cite{chalaki2021RobustGP}.

Using CBF for safety-critical systems \cite{ames2016control}, we integrate a safety layer into our framework to guarantee that the planned trajectory does not violate any of the constraints in the system. Particularly, since safety constraints for each CAV involve the trajectory of other CAVs, inspired by the idea of environmental CBFs \cite{molnar2021safety}, we consider the evolution of other relevant CAVs in constructing our CBFs. By introducing a barrier certificate as a safety middle layer between the vehicle-level tracking controller and physical vehicle, we provide a reactive mechanism to guarantee constraint satisfaction in the system. The enhanced framework results in a QP that can be solved efficiently at each time step. This approach also allows us to consider more complex vehicle dynamics to ensure safety. 

Although several studies on coordination of CAVs  at different traffic scenarios using CBFs have been reported in the literature, the approach reported in this paper advances the state of the art in the following ways. First, in contrast to other efforts which attempt to address satisfaction of all the constraints in the system through CBFs \cite{rodriguez2022vehicle,katriniok2021control,xiao2019decentralized,khaled2020intersection}, in this paper, the motion planning module yields an optimal unconstrained trajectory which guarantees that state, control, and safety constraints are satisfied, while barrier-certificate module only intervenes if the deviations from the nominal optimal trajectory lead to violating the constraints. Second, in several research efforts using CBFs, the lateral safety is handled through imposing a FIFO queuing policy \cite{rodriguez2022vehicle,khaled2020intersection,xiao2021bridging,xiao2019decentralized}. However, in our approach, we do not consider a FIFO queuing policy. Relaxing a FIFO queuing policy is not a trivial task since it introduces a constraint with higher relative degree, which requires special analysis. 

The remainder of the paper is structured as follows. 
In Section \ref{sec:pf}, we introduce the general modeling framework. In Section \ref{sec:MP}, we present the motion planning, and in Section \ref{sec:BC}, we introduce the barrier-certificate modules. We provide simulation results in Section \ref{sec:results}, and concluding remarks in Section \ref{sec:Conclusion}.

\section{Modeling Framework} \label{sec:pf}
We consider a signal-free intersection (Fig. \ref{fig:intersection}) which includes a \textit{coordinator} that stores information about the intersection's geometry and CAVs' trajectories. The coordinator only acts as a database for the CAVs and does not make any decision. The intersection includes a \textit{{control zone}} inside of which the CAVs can communicate with the coordinator. We call the points inside the control zone where paths of CAVs intersect and a lateral collision may occur as \textit{conflict points}. Let $\mathcal{O}\subset \mathbb{N}$ be the set of conflict points, $N(t)\in\mathbb{N}$ be the total number of CAVs inside the control zone at time $t\in\mathbb{R}_{\geq0}$, and $\mathcal{N}(t)=\{1,\ldots,N(t)\}$ be the queue that designates the order in which each CAV entered the control zone.

\begin{figure}
    \centering
\includegraphics[width=0.95\linewidth]{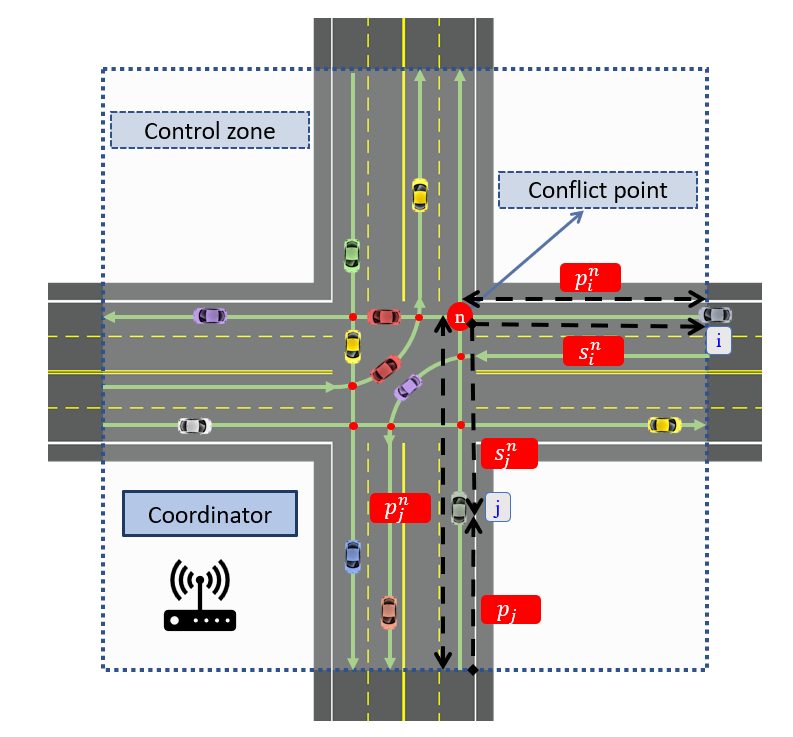}  \caption{A signal free intersection with conflict points.}
    \label{fig:intersection}
\end{figure}

Our coordination framework architecture consists of two main interconnected components called motion planning and barrier certificate (Fig. \ref{fig:flowchart}). Using the simplified dynamics of each CAV, the motion planning module which is built based on the approach reported in \cite{Malikopoulos2020} yields an optimal exit time from the control zone. The resulting optimal exit time corresponds to the unconstrained optimal control trajectory, derived using simple dynamics, and guarantees that none of the state, control, and safety constraints becomes active. The approach in \cite{Malikopoulos2020} considers that a vehicle-level tracking controller can perfectly track the resulting optimal trajectory from the motion planning module. 
In this paper, however, we no longer consider this and introduce the vehicle-level tracking controller that employs a combined feedforward-feedback control law to track the resulting optimal trajectory from the motion planning module. Then, we introduce an intermediate barrier certificate module between the vehicle-level tracking controller and physical vehicle, which takes the reference control law, and by using complex vehicle dynamics, it ensures that none of the constraints in the system are violated.
In particular, the barrier-certificate module yields a QP that can be solved at each time step onboard each CAV in real time . 

\begin{figure}
    \centering
\includegraphics[width=0.95\linewidth]{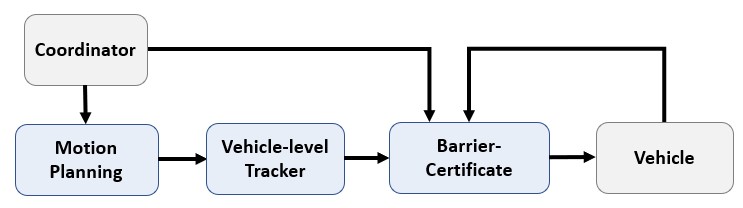}  \caption{Coordination framework architecture.}
    \label{fig:flowchart}
\end{figure}

\section{Motion planning}\label{sec:MP}
For the motion planning module, we model the dynamics of each CAV $i\in\mathcal{N}(t)$ as a double integrator
\begin{align}
\begin{aligned}\label{eq:dynamics}
\dot{p}_i(t)=v_i(t),\\
\dot{v}_i(t)=u_i(t),
\end{aligned}\end{align}
where $p_{i}(t)\in\mathcal{P}_{i}$, $v_{i}(t)\in\mathcal{V}_{i}$, and
$u_{i}(t)\in\mathcal{U}_{i}$ denote position, speed, and control input at $t$, respectively. The sets $\mathcal{P}_{i}$,
$\mathcal{V}_{i}$, and $\mathcal{U}_{i}$, for $i\in\mathcal{N}(t),$
are compact subsets of $\mathbb{R}$. Let $t_{i}^{0}\in\mathbb{R}_{\geq 0}$ be the time that CAV $i\in\mathcal{N}(t)$
enters the control zone, and $t_{i}^{f}>t_i^0\in\mathbb{R}_{\geq 0}$ be the time that CAV $i$ exits the control zone. For each CAV $i\in\mathcal{N}(t)$, the control input and speed are bounded by 
\begin{align}
    u_{i,\min}&\leq u_i(t)\leq u_{i,\max}, \label{eq:uconstraint} \\
    0< v_{\min}&\leq v_i(t)\leq v_{\max} \label{eq:vconstraint},
\end{align}
where $u_{i,\min},u_{i,\max}$ are the minimum and maximum control inputs and $v_{\min},v_{\max}$ are the minimum and maximum speed limits, respectively.

To guarantee rear-end safety between CAV $i\in\mathcal{N}(t)$ and a preceding CAV $k\in\mathcal{N}(t)$, we impose the following constraint,
\begin{gather}\label{eq:rearend}
 p_k(t)-p_i(t)\geq  \underbrace{\gamma + \varphi~ v_i(t)}_{\delta_i(t)},
\end{gather}
where $\delta_i(t)$ is the safe speed-dependent distance, while $\gamma$ and $\varphi\in\mathbb{R}_{>0}$ are the standstill distance and reaction time, respectively. 
\begin{definition}\label{def:distance2Node}
For CAV $i\in\mathcal{N}(t)$, and a conflict point $n\in\mathcal{O}$, $s_i^n:\mathbb{R}_{\geq0}\rightarrow\mathbb{R}$ is the function that gives the distance between CAV $i$ and conflict point $n$ (Fig. \ref{fig:intersection}), and it is given by
\begin{align}
    s_i^n(t) = p_i^n-p_i(t),\quad \forall t\in[t_i^0, t_i^f],
\end{align}
where $p_{i}^n$ is the distance of the conflict point $n\in\mathcal{O}$ from the point that  CAV $i$ enters the control zone.
\end{definition}

Let CAV $j\in\mathcal{N}(t)$ be a CAV that has already planned its trajectory which might cause a lateral collision with CAV $i$. 
CAV $i$ can reach at conflict point $n$ either after or before CAV $j$. 
In the first case, we have
\begin{equation} \label{eq:lateralAfter}
    s_i^n(t)+s_j^n(t) \geq \delta_i(t), \quad \forall t\in[t_i^0, t_j^n],
\end{equation}
where $t_j^n$ is the known time that CAV $j$ reaches the conflict point $n$, i.e., position $p_j^n$. The intuition in \eqref{eq:lateralAfter} is that at $t_j^n$, $s_j^n$ is equal to zero based on Definition \ref{def:distance2Node}, and CAV $i$ should maintain at least a safe distance $\delta_i(t)$ from the conflict point $n$. However, for $t\in[t_i^0, t_j^n)$, $s_j^n$ is a positive number, and hence $s^n_i(t)$ needs to be greater than ${\delta_i(t)-s_j^n(t)}$.
Similarly, in the second case, where CAV $i$ reaches the conflict point $n$ before CAV $j$, we have 
\begin{equation} \label{eq:lateralBefore}
    s_i^n(t)+s_j^n(t) \geq \delta_j(t), \quad \forall t\in[t_i^0, t_i^n],
\end{equation}
where $t_i^n$ is determined by the trajectory planned by CAV $i$. 

Since $ 0 < v_{\min} \leq v_i(t)$, the position $p_i(t)$ is a strictly increasing function.
Thus, the inverse $t_i\left(\cdot\right) = p_i^{-1}\left(\cdot\right)$ exists and it is called the \textit{time trajectory} of CAV $i$ \cite{Malikopoulos2020}. Hence, we have $t_i^n = p_i^{-1}\left(p_i^n\right)$. Therefore, for each candidate path of CAV $i$, there exists a unique time trajectory which can be evaluated at conflict point $n$ to find the time $t_i^n$ that CAV $i$ reaches at conflict point $n$. 

By moving all terms in \eqref{eq:lateralAfter} to the LHS, we obtain ${s_i^n(t)+s_j^n(t)-\delta_i(t)\geq 0}$. Constraint \eqref{eq:lateralAfter} is satisfied, if $\min(s_i^n(t)+s_j^n(t)-\delta_i(t))\geq 0$ in the interval $[t_i^0, t_j^n]$. Likewise, constraint \eqref{eq:lateralBefore} is satisfied if $\min(s_i^n(t)+s_j^n(t)-\delta_j(t))\geq 0$ in the interval $[t_i^0, t_i^n]$.
However, to ensure the lateral safety between CAV $i$ and CAV $j$ at conflict point $n$, either \eqref{eq:lateralAfter} or \eqref{eq:lateralBefore} must be satisfied, and thus we impose the following lateral safety constraint on CAV $i$ 
\begin{align}
    \max \Bigg\{ &\min_{t\in[t_i^0, t_j^n]} \{s_i^n(t)+s_j^n(t)-\delta_i(t)\}, \notag\\
            &\min_{t\in[t_i^0, t_i^n]} \{ s_i^n(t)+s_j^n(t)-\delta_j(t) \}   \Bigg\} \geq 0. \label{eq:lateralMinSafety}
\end{align}

Next, we briefly review the motion planning module that includes the single-level optimization framework for coordination of CAV reported in \cite{Malikopoulos2020}. In this framework, each CAV $i$ communicates with the coordinator to solve a time minimization problem, which determines $t_i^f$, i.e., the time that CAV $i$ must exit the control zone. The optimal exit time $t_i^f$ corresponds to the unconstrained optimal control trajectory which guarantees that none of the state, control, and safety constraints becomes active. This trajectory is communicated back to the coordinator, so that the subsequent CAVs receive this information and plan their trajectories accordingly. Using the unconstrained optimal control trajectory in $[t_i^0, t_i^f]$ which does not activate any of the state, control, and safety constraints, we essentially avoid the inherent implementation challenges in solving a constrained optimal control in real time which requires piecing constrained and unconstrained arcs together \cite{mahbub2020Automatica-2,chalaki2020experimental}.

To formally define the motion planning problem, we first start with the unconstrained optimal control solution of CAV $i$, which has the following form \cite{Malikopoulos2020}
\begin{align} \label{eq:optimalTrajectory}
    u_i(t) &= 6 a_i t + 2 b_i, \notag \\
    v_i(t) &= 3 a_i t^2 + 2 b_i t + c_i, \\
    p_i(t) &= a_i t^3 + b_i t^2 + c_i t + d_i, \notag
\end{align}
where $a_i, b_i, c_i$, and $d_i$ are constants of integration. CAV $i$ must also satisfy the boundary conditions
\begin{align}
     p_i(t_i^0) &= 0,\quad  v_i(t_i^0)= v_i^0 , \label{eq:bci}\\
     p_i(t_i^f)&=p_i^f,\quad u_i(t_i^f)=0, \label{eq:bcf}
\end{align}
where $u_i(t_i^f)=0$ because the speed at the exit of the control zone is not specified \cite{bryson1975applied}.
The details of the derivation of the unconstrained solution are discussed in \cite{Malikopoulos2020}.  

Next, we formally define the motion planning problem to minimize the exit time from the control zone. 

\begin{problem}\label{prb:mintfProblem}
Each CAV $i\in\mathcal{N}(t)$ solves the following optimization problem at $t_i^0$, upon entering the control zone 
\begin{align}\label{eq:tif}
    &\min_{t_i^f\in \mathcal{T}_i(t_i^0)} t_i^f \\
    &\emph{subject to: }\eqref{eq:rearend},\eqref{eq:lateralMinSafety}, \eqref{eq:optimalTrajectory}, \eqref{eq:bci},\eqref{eq:bcf}, \notag
\end{align}
where the compact set $\mathcal{T}_i(t_i^0)$ is the set of feasible solution of CAV $i\in\mathcal{N}(t)$ for the exit time computed at $t_i^0$ using the speed and control input constraints \eqref{eq:uconstraint}-\eqref{eq:vconstraint}, initial condition \eqref{eq:bci}, and final condition \eqref{eq:bcf}. The derivation of this compact set is discussed in \cite{chalaki2020experimental}.
\end{problem}
Solving Problem \ref{prb:mintfProblem}, CAV $i$ derives the optimal exit time, $t_i^f$, corresponding to an optimal trajectory, $\Bar{u}_i(t),\Bar{v}_i(t)$ and $\Bar{p}_i(t)$, which satisfies all the state, control, and safety constraints.

\begin{equation}
    u_i^{ref}(t) = \Bar{u}_i(t) + k_p \cdot( \Bar{p}_i(t) - p_i(t)) + k_v\cdot(\Bar{p}_i(t) - p_i(t)),\label{eq:feedbak-feedforward}
\end{equation}
where $p_i(t)$ and $v_i(t)$ are current observed position and speed of CAV $i$; respectively, while $k_p,k_v\in \mathbb{R}_{>0}$ are feedback control gains. 

\begin{remark}
In this paper, we consider that CAVs solve Problem \ref{prb:mintfProblem} upon entering the control zone. However, one can consider the case in which CAVs re-solve their motion planning problem either periodically or based on an occurrence of a certain event such as the entrance of a new CAV in the control zone as described in \cite{chalaki2021Reseq}.
\end{remark}

\section{Barrier-certificate}\label{sec:BC}
In this section, we present our barrier-certificate module which is a middle layer between the vehicle-level tracking controller and physical vehicle. 
In this module, we consider more realistic model to describe the dynamics of each CAV $i\in\mathcal{N}(t)$ as follows
\begin{align}
\begin{aligned}\label{eq:dynamicsCBF}
\dot{p}_i(t) &= v_i(t),\\
\dot{v}_i(t) &= u_i(t)-\frac{F_r(v_i(t))}{m_i}.
\end{aligned}\end{align}
Let $F_r\in\mathbb{R}_{\geq 0}$ correspond to all resisting forces including longitudinal aerodynamic drag force and rolling resistance force at tires, while $m_i\in\mathbb{R}_{\geq 0}$ is the mass of CAV \cite{khalil2002nonlinear,rajamani2011vehicle}. The net resisting force typically is approximated as a quadratic function of the CAV's speed \cite[Chapter~2]{khalil2002nonlinear}, i.e.,
\begin{gather}
    F_r(v_i(t)) = \beta_0 + \beta_1~v_i(t) + \beta_2~v^2_i(t),
\end{gather}
where $\beta_0,\beta_1,\beta_2\in\mathbb{R}_{\geq 0}$ are all constant parameters that can be computed empirically. We write \eqref{eq:dynamicsCBF} in a control-affine, vector form as 
\begin{align}\label{eq:controlAffine}
 \dot{\mathbf{x}}_{i}(t) &= 
 \underbrace{\begin{bmatrix}
v_i(t) \\
-\frac{F_r(v_i(t))}{m_i} 
\end{bmatrix}}_{ \mathbf{f}_i(\mathbf{x}_{i}(t))}
+ 
\underbrace{\begin{bmatrix}
0 \\
1 
\end{bmatrix}}_{\mathbf{g}_i(\mathbf{x}_{i}(t))} u_i(t),
\end{align}
where $\mathbf{x}_{i}(t)=\left[p_{i}(t), v_{i}(t)\right]^\top \in \mathcal{P}_i\times\mathcal{V}_i$ denotes the state of the CAV $i$ at $t$. Note that $\mathbf{f}_i$ and $\mathbf{g}_i$ are globally Lipschitz functions, which results in global existence and uniqueness of the solution of \eqref{eq:controlAffine} if $u_i$ is also globally Lipschitz \cite[Chapter~3]{khalil2002nonlinear}.

\subsection{Preliminary Materials}
In this section, we review some basic definitions and results from \cite{ames2016control,ames2019control} adapted appropriately to reflect our notation. Inspired by the idea of environmental CBFs \cite{molnar2021safety}, we construct a CBF for the cases in which the constraint of the CAV is coupled to the dynamics of other CAVs, such as lateral safety and rear-end safety constraints. To simplify notation, we discard the argument of time in our state and control variables whenever it does not create confusion.

Next, we define the safe set of a constraint that depends only on the state of a CAV.
 
\begin{definition} \label{def:SafeSet} For CAV $i\in\mathcal{N}(t)$, the safe set $\mathcal{C}$ is a zero-superlevel set of a continuously differentiable function $h:\mathcal{P}_i\times\mathcal{V}_i\rightarrow\mathbb{R}$,
\begin{align}\label{eq:safeSet}
    \mathcal{C} &=\{\mathbf{x}_i\in \mathcal{P}_i\times\mathcal{V}_i: h(\mathbf{x}_i)\geq 0\}.
\end{align}
\end{definition}
For those cases where a constraint of CAV $i$ depends also on another CAV $j$, i.e., in rear-end safety and lateral safety constraints, we define the coupled safe set next.

\begin{definition}\label{def:CoupledSafeSet} For CAV $i\in\mathcal{N}(t)$, the coupled safe set $\mathcal{C}^{\prime}$  with CAV $j\in\mathcal{N}(t)$ is a zero-superlevel set of a continuously differentiable function $z:D\subseteq	(\mathcal{P}_i\times\mathcal{V}_i)\times(\mathcal{P}_j\times\mathcal{V}_j)\rightarrow\mathbb{R}$,
\begin{align}\label{eq:safeSet}
    \mathcal{C}^{\prime} &=\{(\mathbf{x}_i,\mathbf{x}_j)\in D: z(\mathbf{x}_i,\mathbf{x}_j)\geq 0\}.
\end{align}
\end{definition}

Next, we define the safety of the CAV $i$, with longitudinal dynamics \eqref{eq:controlAffine}, with respect to the safe set $\mathcal{C}$.

\begin{definition} CAV $i$ with the longitudinal dynamics given by \eqref{eq:controlAffine} is safe with respect to the safe set $\mathcal{C}$ if the set $\mathcal{C}$ is forward-invariant, namely, if $\mathbf{x}_i(t_i^0)\in\mathcal{C}$, $\mathbf{x}_i(t)\in\mathcal{C}$ for all  $t\geq t_i^0$.
\end{definition}
Similarly, we define safety with respect to the coupled safe set $\mathcal{C}^{\prime}$. 

\begin{definition} CAV $i$ with longitudinal dynamics given by \eqref{eq:controlAffine} is safe with respect to the coupled safe set $\mathcal{C}^{\prime}$ with CAV $j$, if the set $\mathcal{C}^{\prime}$ is forward-invariant, namely, if $(\mathbf{x}_i(t_i^0),\mathbf{x}_j(t_i^0)) \in\mathcal{C}^{\prime}$, $(\mathbf{x}_i(t),\mathbf{x}_j(t))\in\mathcal{C}^{\prime}$ for all $t\geq t_i^0$.
\end{definition}
Next, we need to define the extended class $\mathcal{K}_{\infty}$ function.
\begin{definition}
A strictly increasing function $\alpha:\mathbb{R}\rightarrow\mathbb{R}$ with $\alpha(0)=0$, is an extended class $\mathcal{K}_{\infty}$ function.
\end{definition}

\begin{definition}[\!\cite{ames2016control}]\label{def:CBFnotCoupled}
Let $\mathcal{C}$ be a safe set for CAV $i\in\mathcal{N}(t)$ for a continuously differentiable function $h:\mathcal{P}_i\times\mathcal{V}_i\rightarrow\mathbb{R}$. The function $h$ is a CBF if there exists an extended class $\mathcal{K}_{\infty}$ function $\alpha(\cdot)$ such that for all $\mathbf{x}_i\in\mathcal{C}$
\begin{align}
    \sup_{u_i\in\mathcal{U}_i} \dot{h}(\mathbf{x}_i,u_i) \geq -\alpha(h(\mathbf{x}_i)),
\end{align}
where 
\begin{align} \label{eq:gradh}
 \dot{h}(\mathbf{x}_i,u_i) &= \nabla h(\mathbf{x}_i) \cdot \dot{\mathbf{x}}_i,
\end{align}
and $\dot{\mathbf{x}}_i$ is given by \eqref{eq:controlAffine}.
\end{definition}
\begin{remark}
We can also write \eqref{eq:gradh} in terms of Lie derivatives as follows
\begin{align}
    \dot{h}(\mathbf{x}_i,u_i) = L_{\mathbf{f}_i} h(\mathbf{x}_i) +  L_{\mathbf{g}_i} h(\mathbf{x}_i) u_i,
\end{align}
where 
\begin{align}
    L_{\mathbf{f}_i} h(\mathbf{x}_i) &= \left[\frac{\partial h(\mathbf{x}_i)}{\partial p_i }, \frac{\partial h(\mathbf{x}_i)}{\partial v_i }\right]^\top\cdot\mathbf{f}_i(\mathbf{x}_i),\\
    L_{\mathbf{g}_i} h(\mathbf{x}_i) &= \left[\frac{\partial h(\mathbf{x}_i)}{\partial p_i }, \frac{\partial h(\mathbf{x}_i)}{\partial v_i }\right]^\top\cdot\mathbf{g}_i(\mathbf{x}_i).
\end{align}
\end{remark}

\begin{theorem}[\! \cite{ames2016control}]\label{th:cbfTheorem}
Let $\mathcal{C}$ be a safe set for CAV $i\in\mathcal{N}(t)$ for a continuously differentiable function $h:\mathcal{P}_i\times\mathcal{V}_i\rightarrow\mathbb{R}$. If $h$ is a CBF on $\mathcal{P}_i\times\mathcal{V}_i$, then any Lipschitz continuous controller $u_i:\mathcal{P}_i\times\mathcal{V}_i \rightarrow \mathcal{U}_i$ such that $u_i(\mathbf{x}_i)\in \mathcal{A}_{h}(\mathbf{x}_i)$  renders the safe set $\mathcal{C}$ forward invariant, where 
\begin{align}\label{eq:cbfSafe}
    \mathcal{A}_{h}(\mathbf{x}_i) = \{u_i \in \mathcal{U}_i:\nabla h(\mathbf{x}_i) \cdot \dot{\mathbf{x}}_i \geq -\alpha(h(\mathbf{x}_i))\}.
\end{align}
\end{theorem}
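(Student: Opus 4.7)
The plan is to prove forward invariance of $\mathcal{C}$ by converting the CBF condition along trajectories into a scalar differential inequality and then invoking a comparison argument. First I would verify that the closed-loop ODE $\dot{\mathbf{x}}_i = \mathbf{f}_i(\mathbf{x}_i) + \mathbf{g}_i(\mathbf{x}_i)\, u_i(\mathbf{x}_i)$ admits a unique absolutely continuous solution on a maximal interval containing $t_i^0$. This follows from the global Lipschitz property of $\mathbf{f}_i$ and $\mathbf{g}_i$ already noted after \eqref{eq:controlAffine}, combined with the hypothesis that $u_i(\cdot)$ is Lipschitz in $\mathbf{x}_i$; standard results from \cite[Chapter~3]{khalil2002nonlinear} apply.

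Next, I would introduce the auxiliary scalar $\eta(t) := h(\mathbf{x}_i(t))$. Because $h$ is continuously differentiable and $\mathbf{x}_i(\cdot)$ is absolutely continuous, $\eta$ is absolutely continuous and satisfies, almost everywhere,
\begin{equation*}
    \dot{\eta}(t) = \nabla h(\mathbf{x}_i(t)) \cdot \dot{\mathbf{x}}_i(t) = L_{\mathbf{f}_i} h(\mathbf{x}_i(t)) + L_{\mathbf{g}_i} h(\mathbf{x}_i(t))\, u_i(\mathbf{x}_i(t)).
\end{equation*}
The membership $u_i(\mathbf{x}_i(t))\in \mathcal{A}_h(\mathbf{x}_i(t))$ then translates directly, via \eqref{eq:cbfSafe}, into the pointwise differential inequality $\dot{\eta}(t) \geq -\alpha(\eta(t))$ along the trajectory.

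The core step is now a comparison argument applied to this inequality. Since $\alpha$ is an extended class $\mathcal{K}_\infty$ function, it is strictly increasing with $\alpha(0)=0$, so the scalar ODE $\dot{y} = -\alpha(y)$ with initial condition $y(t_i^0)=\eta(t_i^0)\geq 0$ has its unique solution $y(t)$ remaining in $[0,\eta(t_i^0)]$ for all $t\geq t_i^0$ (the set $\{y\geq 0\}$ is forward invariant under this one-dimensional flow because $\dot{y}\big|_{y=0}=0$). The comparison lemma then gives $\eta(t)\geq y(t)\geq 0$ for all $t\geq t_i^0$ in the domain of existence, which implies $\mathbf{x}_i(t)\in\mathcal{C}$. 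Forward invariance of $\mathcal{C}$ together with the boundedness this provides also precludes finite-time escape, so the solution extends for all $t\geq t_i^0$, completing the argument.

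The main obstacle I anticipate is justifying the comparison step rigorously when $\alpha$ is only assumed to be strictly increasing rather than, say, locally Lipschitz; if $\alpha$ fails the Osgood/Lipschitz condition at zero, one must either strengthen the assumption on $\alpha$ (which is standard in the CBF literature) or appeal directly to Nagumo's theorem, which characterizes forward invariance of a closed set $\mathcal{C} = \{h \geq 0\}$ via the subtangential condition $\dot{h}(\mathbf{x}_i,u_i)\geq 0$ on $\partial\mathcal{C}$—a condition which the inequality $\dot{\eta}\geq -\alpha(\eta)$ guarantees at points where $\eta=0$. I would therefore either adopt the usual Lipschitz refinement of $\alpha$ or route the final step through Nagumo's theorem to obtain the forward invariance conclusion cleanly.
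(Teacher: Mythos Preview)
Your argument is correct and follows the standard comparison-lemma / Nagumo route that is used in the CBF literature to establish forward invariance. However, the paper itself does not supply a proof of this theorem: it is stated as a cited result from \cite{ames2016control} and is used as a black box. So there is nothing in the paper to compare your proposal against beyond the citation; your write-up is essentially a reconstruction of the original proof in \cite{ames2016control}, including the same caveat about regularity of $\alpha$ that motivates the locally Lipschitz assumption adopted there.
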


Inspired by the idea of environmental CBF \cite{molnar2021safety}, which considers the evolution of environment state in analyzing safety, we consider the evolution of other relevant CAVs in constructing the CBF for CAV $i$.

\begin{definition}\label{def:CBFCoupled}
Let $\mathcal{C}^{\prime}$ be a coupled safe set for CAV $i$ and $j\in\mathcal{N}(t)$ for a continuously differentiable function $z:D\subseteq	(\mathcal{P}_i\times\mathcal{V}_i)\times(\mathcal{P}_j\times\mathcal{V}_j)\rightarrow\mathbb{R}$. The function $z$ is a CBF if there exists an extended class $\mathcal{K}_{\infty}$ function $\alpha(\cdot)$ such that for all $(\mathbf{x}_i,\mathbf{x}_j)\in\mathcal{C}^{\prime}$
\begin{align}
    \sup_{u_i\in\mathcal{U}_i} \dot{z}(\mathbf{x}_i,u_i, \mathbf{x}_j, u_j) \geq -\alpha(z(\mathbf{x}_i,\mathbf{x}_j)),
\end{align}
where 
\begin{align} \label{eq:gradz}
 \dot{z}(\mathbf{x}_i,u_i, \mathbf{x}_j, u_j) &= \nabla_{\mathbf{x}_i} z(\mathbf{x}_i,\mathbf{x}_j) \cdot
 \underbrace{(\mathbf{f}_i(\mathbf{x}_{i}) + \mathbf{g}_i(\mathbf{x}_{i}) u_i)}_{\dot{\mathbf{x}}_i}\nonumber\\ 
 &+\nabla_{\mathbf{x}_j} z(\mathbf{x}_i, \mathbf{x}_j) \cdot  \underbrace{(\mathbf{f}_j(\mathbf{x}_{j}) + \mathbf{g}_j(\mathbf{x}_{j}) u_j)}_{\dot{\mathbf{x}}_j},\\
 \nabla_{\mathbf{x}_i} z(\mathbf{x}_i, \mathbf{x}_j)&=\left[\frac{\partial z(\mathbf{x}_i,\mathbf{x}_j)}{\partial p_i }, \frac{\partial z(\mathbf{x}_i,\mathbf{x}_j)}{\partial v_i }\right]^\top, \\
 \nabla_{\mathbf{x}_j} z(\mathbf{x}_i, \mathbf{x}_j)&=\left[\frac{\partial z(\mathbf{x}_i,\mathbf{x}_j)}{\partial p_j }, \frac{\partial z(\mathbf{x}_i,\mathbf{x}_j)}{\partial v_j }\right]^\top. 
\end{align}
\end{definition}
\begin{remark}
Note that in our decentralized coordination framework, CAV $i\in\mathcal{N}(t)$ plans its trajectory after CAV $j\in\mathcal{N}(t)$, which means that $\mathbf{x}_j$ and $u_j$ are available to CAV $i$ through the coordinator. 
\end{remark}

\begin{theorem}\label{th:CBFCouples}
Let $\mathcal{C}^{\prime}$ be a coupled safe set for CAV $i\in\mathcal{N}(t)$ and $j\in\mathcal{N}(t)$ for a continuously differentiable function $z:D\subseteq	(\mathcal{P}_i\times\mathcal{V}_i)\times(\mathcal{P}_j\times\mathcal{V}_j)\rightarrow\mathbb{R}$. If $z$ is a CBF on $D$, then any Lipschitz continuous controller $u_i:D \rightarrow \mathcal{U}_i$ such that $u_i(\mathbf{x}_i,\mathbf{x}_j)\in \mathcal{A}_{z}(\mathbf{x}_i,\mathbf{x}_j)$  renders the coupled safe set $\mathcal{C}^{\prime}$ forward invariant, where 
\begin{align}\label{eq:cbfCoupledSafe}
    \mathcal{A}_{z}(\mathbf{x}_i,\mathbf{x}_j) &= \{u_i \in \mathcal{U}_i:\nabla_{\mathbf{x}_i} z(\mathbf{x}_i,\mathbf{x}_j) \cdot \dot{\mathbf{x}}_i \nonumber\\ &+\nabla_{\mathbf{x}_j} z(\mathbf{x}_i,\mathbf{x}_j) \cdot \dot{\mathbf{x}}_j\geq -\alpha(z(\mathbf{x}_i,\mathbf{x}_j))\}.
\end{align}
\end{theorem}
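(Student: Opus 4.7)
The plan is to adapt the proof of Theorem \ref{th:cbfTheorem} to the coupled setting by treating the concatenated vector $(\mathbf{x}_i,\mathbf{x}_j)$ as a composite state of an augmented control-affine system. Because in our decentralized framework CAV $j$ plans its trajectory before CAV $i$, the signal $u_j(t)$ and the resulting $\mathbf{x}_j(t)$ can be regarded as known exogenous time-varying quantities when we analyze CAV $i$; this sidesteps any interpretation of $u_j$ as a simultaneously chosen control and lets us re-use the scalar comparison machinery that underlies Theorem \ref{th:cbfTheorem}.

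First, I would invoke existence and uniqueness of the closed-loop joint trajectory on an interval $[t_i^0, t_i^0+T)$: the drifts $\mathbf{f}_i,\mathbf{f}_j$ and input vector fields $\mathbf{g}_i,\mathbf{g}_j$ are globally Lipschitz (as noted below \eqref{eq:controlAffine}), and $u_i,u_j$ are Lipschitz continuous by assumption, so the composite ODE has a unique absolutely continuous solution. Next, define the scalar map $\eta(t):=z(\mathbf{x}_i(t),\mathbf{x}_j(t))$. By the chain rule,
\begin{equation}
\dot{\eta}(t)=\nabla_{\mathbf{x}_i}z(\mathbf{x}_i,\mathbf{x}_j)\cdot\dot{\mathbf{x}}_i + \nabla_{\mathbf{x}_j}z(\mathbf{x}_i,\mathbf{x}_j)\cdot\dot{\mathbf{x}}_j = \dot{z}(\mathbf{x}_i,u_i,\mathbf{x}_j,u_j),
\end{equation}
which is exactly the expression in \eqref{eq:gradz}. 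The hypothesis $u_i(\mathbf{x}_i,\mathbf{x}_j)\in\mathcal{A}_z(\mathbf{x}_i,\mathbf{x}_j)$ from \eqref{eq:cbfCoupledSafe} then furnishes the differential inequality $\dot{\eta}(t)\geq -\alpha(\eta(t))$ along every trajectory.

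I would close the argument with a standard comparison-lemma / Nagumo-type step. Let $\eta^{\star}(t)$ solve $\dot{\eta}^{\star}=-\alpha(\eta^{\star})$ with $\eta^{\star}(t_i^0)=\eta(t_i^0)\geq 0$. Because $\alpha$ is strictly increasing with $\alpha(0)=0$, the origin is an equilibrium of this scalar ODE, so $\eta^{\star}(t)\geq 0$ for all $t\geq t_i^0$. The comparison lemma then gives $\eta(t)\geq \eta^{\star}(t)\geq 0$, which is precisely the statement that $(\mathbf{x}_i(t),\mathbf{x}_j(t))\in\mathcal{C}^{\prime}$ for all $t\geq t_i^0$, i.e., $\mathcal{C}^{\prime}$ is forward-invariant.

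The main obstacle I anticipate is not the algebra but guaranteeing that the Lipschitz regularity required by existence/uniqueness and by the comparison lemma survives the closed-loop substitution of $u_i(\mathbf{x}_i,\mathbf{x}_j)$; this is handled by the standing Lipschitz assumptions on $u_i$, $u_j$, and the vector fields. A secondary subtlety is that $D$ may be a proper subset of $(\mathcal{P}_i\times\mathcal{V}_i)\times(\mathcal{P}_j\times\mathcal{V}_j)$, so one should argue, as is standard for CBF-based invariance proofs, that the trajectory cannot escape $D$ along $\mathcal{C}^{\prime}$ during the interval of interest; otherwise the domain can be enlarged by a smooth extension of $z$ or by restricting attention to the connected component of $\mathcal{C}^{\prime}$ containing the initial condition.
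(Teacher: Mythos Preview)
Your proposal is correct and takes essentially the same approach as the paper: the paper's proof simply says to stack $\mathbf{x}_i$ and $\mathbf{x}_j$ into a composite state $\mathbf{X}_{i,j}$ and then invoke Theorem~\ref{th:cbfTheorem}, whereas you carry out exactly this reduction and additionally unpack the comparison-lemma argument that underlies Theorem~\ref{th:cbfTheorem}. The extra regularity and domain remarks you add are not in the paper's two-line proof but are consistent with the standing assumptions.
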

\begin{proof}
The proof is  similar to the one in \cite[Theorem 2]{molnar2021safety}.
By considering the new state $\mathbf{X}_{i,j}$ as stacked state of $\mathbf{x}_i$ and $\mathbf{x}_j$, and applying Theorem \ref{th:cbfTheorem}, the result follows. 
\end{proof}

Using CBFs, we can map all of the constraints from the states for CAV $i\in\mathcal{N}(t)$ to the control input as, formally derived next.

\subsection{Constructing CBFs}
In this section, we construct CBFs for \eqref{eq:vconstraint}-\eqref{eq:lateralBefore}.
\subsubsection{Speed limits}
For the speed constraint \eqref{eq:vconstraint} of CAV $i$, we consider 
\begin{align}
    h_1(\mathbf{x}_i) &= v_{\max} - v_i,\\ 
    h_2(\mathbf{x}_i) &= v_i - v_{\min}.
\end{align}
From Definition \ref{def:CBFnotCoupled} and choosing $\alpha_q(x) = \lambda_q x,~\lambda_q\in\mathbb{R}_{>0}, q\in\{1,2\}$, we have $h_1(\mathbf{x}_i)$ and $h_2(\mathbf{x}_i)$ as CBFs to ensure satisfying the speed limit constraint. Then, from Theorem \ref{th:cbfTheorem}, any control input $u_i$ should satisfy the following
\begin{align}
    u_i\leq \frac{F_r(v_i)}{m_i} + \lambda_1(v_{\max}-v_i) ,\label{eq:CBFVMax}\\ 
    u_i\geq \frac{F_r(v_i)}{m_i} - \lambda_2(v_i-v_{\min})\label{eq:CBFVMin}.
\end{align}
\subsubsection{Rear-end safety}Since rear-end safety constraint depends on both states of CAV $i$ and $k\in\mathcal{N}(t)$, we have 
\begin{align}
    z_1(\mathbf{x}_i,\mathbf{x}_k) &= p_k -p_i -(\gamma + \varphi\cdot v_i).
\end{align}
From Definition \ref{def:CBFCoupled} and choosing $\alpha_3(x) = \lambda_3 x,~\lambda_3\in\mathbb{R}_{>0}$, we have $z_1(\mathbf{x}_i,\mathbf{x}_k)$ as a CBF to guarantee satisfying the rear-end safety constraint. Next, we use the result of Theorem \ref{th:cbfTheorem} to derive the condition on control input that needs to be satisfied.  The gradient of $z_1$ is
\begin{align}
     \nabla_{\mathbf{x}_i} z_1(\mathbf{x}_i, \mathbf{x}_k)&=\left[-1, -\varphi\right]^\top,\label{eq:delXi}\\
    \nabla_{\mathbf{x}_k}z_1(\mathbf{x}_i, \mathbf{x}_k)&=\left[1,0\right]^\top\label{eq:delXk}.
\end{align}
Taking the dot product of \eqref{eq:delXi} and \eqref{eq:delXk} with $\dot{\mathbf{x}}_i$ and $\dot{\mathbf{x}}_k$; respectively, yields
\begin{align}
     \nabla_{\mathbf{x}_i} z_1(\mathbf{x}_i, \mathbf{x}_k) \cdot \dot{\mathbf{x}}_i &=-v_i-\varphi(-\frac{F_r(v_i)}{m_i}+u_i),\\
     \nabla_{\mathbf{x}_k} z_1(\mathbf{x}_i, \mathbf{x}_k) \cdot \dot{\mathbf{x}}_k &=v_k.
\end{align}
Using the result of Theorem \ref{th:CBFCouples}, the control input $u_i$ should satisfy the following condition in order to satisfy the rear-end safety constraint,
\begin{align}\label{eq:cbfRear-end}
    u_i \leq \frac{1}{\varphi} \left[\lambda_3 (p_k-p_i-(\gamma+\varphi v_i)) + v_k-v_i\right]+\frac{F_r(v_i)}{m_i}.
\end{align}
\subsubsection{Lateral safety} For the lateral-safety constraint \eqref{eq:lateralAfter}, when $t_i^n>t_j^n$, we have
\begin{align}
    z_2(\mathbf{x}_i,\mathbf{x}_j) &= s_i^n+s_j^n -\delta_i \nonumber\\
    &=(p_i^n - p_i) + (p_j^n-p_j) - (\gamma + \varphi\cdot v_i).
\end{align}
By choosing $\alpha_4(x) = \lambda_4 x,~\lambda_4\in\mathbb{R}_{>0}$,  $z_2(\mathbf{x}_i,\mathbf{x}_j)$ is a CBF to guarantee satisfying the lateral safety constraint, which implies
\begin{align}
     \nabla_{\mathbf{x}_i} z_2(\mathbf{x}_i, \mathbf{x}_j)&=\left[-1, -\varphi\right]^\top,\\
    \nabla_{\mathbf{x}_j}z_2(\mathbf{x}_i, \mathbf{x}_j)&=\left[-1,0\right]^\top.
\end{align}
Taking the dot product of above equations with $\dot{\mathbf{x}}_i$ and $\dot{\mathbf{x}}_j$; respectively, yields  
\begin{align}
     \nabla_{\mathbf{x}_i} z_2(\mathbf{x}_i, \mathbf{x}_j) \cdot \dot{\mathbf{x}}_i &=-v_i-\varphi(-\frac{F_r(v_i)}{m_i}+u_i),\\
     \nabla_{\mathbf{x}_j} z_2(\mathbf{x}_i, \mathbf{x}_j) \cdot \dot{\mathbf{x}}_j &=-v_j.
\end{align}
For this case, the control input $u_i$ should satisfy the following condition in order to satisfy constraint \eqref{eq:lateralAfter},
\begin{align}\label{eq:cbfLateralAfter}
    u_i \leq \frac{1}{\varphi} \left[\lambda_4(s_i^n+s_j^n -\delta_i ) -(v_i+v_j)\right]+\frac{F_r(v_i)}{m_i}.
\end{align}
For the lateral-safety constraint \eqref{eq:lateralBefore}, we have
\begin{align}\label{eq:z3}
    z_3(\mathbf{x}_i,\mathbf{x}_j) &= s_i^n+s_j^n -\delta_j \nonumber\\
    &=(p_i^n - p_i) + (p_j^n-p_j) - (\gamma + \varphi\cdot v_j).
\end{align}
However, since $\dot{z}_3$ does not depend on $u_i$, \eqref{eq:z3} cannot be a valid CBF for CAV $i$. These type of constraints are called constraints with higher relative degree $r>1$. For example, the relative degree of \eqref{eq:lateralBefore} is equal to $2$.
A complete analysis of handling higher relative degree constraints in general cases is given in \cite{xiao2019control}. 

Next, we use a higher order CBF based on \cite[Definition~7, Theorem~5]{xiao2019control}, and extend it to our case with coupled constraints. We first form a series of functions $\psi_q:D\subseteq	(\mathcal{P}_i\times\mathcal{V}_i)\times(\mathcal{P}_j\times\mathcal{V}_j)\rightarrow\mathbb{R}$, $q=\{0,1,2\}$ as 
\begin{align}\label{eq:psis}
    \psi_0(\mathbf{x}_i,\mathbf{x}_j) &= z_3(\mathbf{x}_i,\mathbf{x}_j), \notag\\
    \psi_1(\mathbf{x}_i,\mathbf{x}_j)  & =\dot{\psi}_0(\mathbf{x}_i,\mathbf{x}_j)+ \alpha_5(\psi_0(\mathbf{x}_i,\mathbf{x}_j)),\\
    \psi_2(\mathbf{x}_i,\mathbf{x}_j) & =\dot{\psi}_1(\mathbf{x}_i,\mathbf{x}_j)+\alpha_6(\psi_1(\mathbf{x}_i,\mathbf{x}_j)),\notag
\end{align}
where $\alpha_5(\cdot)$ and $\alpha_6(\cdot)$ are extended class $\mathcal{K}_{\infty}$ functions. The zero-superlevel sets of $\psi_0$ and $\psi_1$ are given by  
\begin{align}
\mathcal{C}^{\prime}_1 &=\{(\mathbf{x}_i,\mathbf{x}_j)\in D: \psi_0(\mathbf{x}_i,\mathbf{x}_j)\geq 0\},\\
\mathcal{C}^{\prime}_2 &=\{(\mathbf{x}_i,\mathbf{x}_j)\in D: \psi_1(\mathbf{x}_i,\mathbf{x}_j)\geq 0\}.
\end{align}
Based on \cite[Definition~7]{xiao2019control}, if there exist extended class $\mathcal{K}_{\infty}$ functions $\alpha_5(\cdot)$ and $\alpha_6(\cdot)$ such that ${ \psi_2(\mathbf{x}_i,\mathbf{x}_j)\geq 0}$ for all $(\mathbf{x}_i,\mathbf{x}_j) \in \mathcal{C}^{\prime}_1 \cap \mathcal{C}^{\prime}_2$, $z_3(\mathbf{x}_i,\mathbf{x}_j)$ is a higher order CBF.
From \cite[Theorem~5]{xiao2019control}, if $(\mathbf{x}_i(t_i^0),\mathbf{x}_j(t_i^0)) \in \mathcal{C}^{\prime}_1 \cap \mathcal{C}^{\prime}_2$, then any Lipschitz continuous controller $u_i:D\rightarrow\mathbb{R}$ such that $u_i(\mathbf{x}_i,\mathbf{x}_j)\in \mathcal{A}_{\psi}(\mathbf{x}_i,\mathbf{x}_j)$ renders the set $\mathcal{C}^{\prime}_1 \cap \mathcal{C}^{\prime}_2$ forward invariant, where 
\begin{align}\label{eq:cbfCoupledSafeRelativeDegree2}
    \mathcal{A}_{\psi}(\mathbf{x}_i,\mathbf{x}_j) &= \{u_i \in \mathcal{U}_i: \psi_2(\mathbf{x}_i,\mathbf{x}_j)\geq 0\}.
\end{align}

\begin{theorem}
The allowable set of control actions that renders the set $\mathcal{C}^{\prime}_1 \cap \mathcal{C}^{\prime}_2$ forward invariant, $\mathcal{A}_{\psi}$, is given by
\begin{align}\label{eq:cbflateralBefore}
u_i&\leq -\lambda_5(v_i+v_j)+\frac{F_r(v_i)}{m_i}+\frac{F_r(v_j)}{m_j}-\frac{\varphi \beta_1 F_r(v_j)}{m_j^2}\nonumber\\
&-\frac{2\varphi \beta_2 v_j F_r(v_j)}{m_j^2}+(\frac{\varphi \beta_1+2\varphi \beta_2 v_j}{m_j}-\lambda_5\varphi-1)u_j\nonumber\\
&-\varphi \dot{u}_j + \lambda_6~\psi_1.
\end{align}
\end{theorem}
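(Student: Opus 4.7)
The plan is to unwind the higher-order CBF condition $\psi_2(\mathbf{x}_i,\mathbf{x}_j)\geq 0$ from \eqref{eq:cbfCoupledSafeRelativeDegree2} explicitly, under the choice of linear extended class $\mathcal{K}_{\infty}$ functions $\alpha_5(x)=\lambda_5 x$ and $\alpha_6(x)=\lambda_6 x$ with $\lambda_5,\lambda_6\in\mathbb{R}_{>0}$, consistent with every other CBF constructed in Section \ref{sec:BC}. Since $\mathcal{A}_{\psi}$ is by definition the set of control inputs making $\psi_2\geq 0$, the entire task reduces to rearranging a single scalar inequality for $u_i$ and recognising that the resulting bound is exactly \eqref{eq:cbflateralBefore}; the forward-invariance then follows from \cite[Theorem~5]{xiao2019control}.

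First, I would compute $\dot{\psi}_0=\dot{z}_3$ directly from \eqref{eq:z3} using the dynamics in \eqref{eq:controlAffine}. Because $z_3$ depends on $v_j$ but not on $v_i$, the input $u_i$ does not appear in $\dot{z}_3$, confirming the relative-degree-two claim; explicitly, $\dot{z}_3=-v_i-v_j-\varphi(u_j-F_r(v_j)/m_j)$. Assembling $\psi_1=\dot{\psi}_0+\lambda_5 z_3$ then gives an expression still free of $u_i$, which is the signature of a relative-degree-two constraint in this coupled-state setting.

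Next, I would differentiate $\psi_1$ once more to obtain $\dot{\psi}_1$, where $u_i$ finally enters through $-\dot{v}_i=-u_i+F_r(v_i)/m_i$. The delicate intermediate computation is the chain-rule derivative of the drag term $F_r(v_j)/m_j$, which produces $(\beta_1+2\beta_2 v_j)(u_j-F_r(v_j)/m_j)/m_j$; this is precisely where the coefficients $\varphi\beta_1 F_r(v_j)/m_j^2$, $2\varphi\beta_2 v_j F_r(v_j)/m_j^2$, and $(\varphi\beta_1+2\varphi\beta_2 v_j)u_j/m_j$ in \eqref{eq:cbflateralBefore} originate. The quantity $\dot{u}_j$ is treated as known, since the planned control of CAV $j$ is communicated to CAV $i$ through the coordinator per the Remark following Definition \ref{def:CBFCoupled}, and it contributes the $-\varphi\dot{u}_j$ term.

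Finally, I would form $\psi_2=\dot{\psi}_1+\lambda_6\psi_1\geq 0$ and isolate $u_i$. Because the coefficient of $u_i$ on the left is $-1$, the inequality flips and becomes an upper bound on $u_i$, yielding \eqref{eq:cbflateralBefore} after collecting terms, with $\lambda_6\psi_1$ kept unexpanded on the right-hand side. The only real obstacle is bookkeeping: contributions to the coefficient of $u_j$ arise from three distinct places---the direct $-\varphi u_j$ produced in $\dot{z}_3$, the factor $\lambda_5$ multiplying that same $-\varphi u_j$ inside $\lambda_5\dot{\psi}_0$, and the chain-rule term $\varphi(\beta_1+2\beta_2 v_j)u_j/m_j$ coming from differentiating the drag---and they must combine cleanly into the single coefficient $(\varphi\beta_1+2\varphi\beta_2 v_j)/m_j-\lambda_5\varphi-1$ appearing in the statement. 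No nontrivial analytic step is hidden beyond this algebra, since $\mathbf{f}_i,\mathbf{g}_i,\mathbf{f}_j,\mathbf{g}_j$ are already known to be globally Lipschitz from the discussion following \eqref{eq:controlAffine}.
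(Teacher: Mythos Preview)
Your proposal is correct and follows essentially the same route as the paper: choose linear $\alpha_5,\alpha_6$, compute $\psi_1=\dot z_3+\lambda_5 z_3$ (which is free of $u_i$), differentiate once more so that $u_i$ enters through $-\dot v_i$, form $\psi_2=\dot\psi_1+\lambda_6\psi_1\geq 0$, and solve for $u_i$. One small bookkeeping slip in your narrative: the $-1$ contribution to the $u_j$ coefficient comes from differentiating the $-v_j$ term in $\dot z_3$ (i.e., $-\dot v_j=-u_j+F_r(v_j)/m_j$), not from the $-\varphi u_j$ term, whose time derivative instead supplies the $-\varphi\dot u_j$ piece; otherwise your accounting of the three sources and the final inequality is on target.
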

\begin{proof}
By choosing $\alpha_q(x) = \lambda_q x,~\lambda_q\in\mathbb{R}_{>0}, q\in\{5,6\}$, we have 
\begin{align}\label{eq:psi_1}
    &\psi_1(\mathbf{x}_i,\mathbf{x}_j)=\nabla_{\mathbf{x}_i}z_3(\mathbf{x}_i,\mathbf{x}_j)\cdot\dot{\mathbf{x}}_i + \nabla_{\mathbf{x}_j}z_3(\mathbf{x}_i,\mathbf{x}_j)\cdot\dot{\mathbf{x}}_j\notag\\
    &+\lambda_5 z_3(\mathbf{x}_i,\mathbf{x}_j),
\end{align}
where 
\begin{align}
    \nabla_{\mathbf{x}_i}z_3(\mathbf{x}_i,\mathbf{x}_j)\cdot\dot{\mathbf{x}}_i &= -v_i,\label{eq:delz3xi}\\
    \nabla_{\mathbf{x}_j}z_3(\mathbf{x}_i,\mathbf{x}_j)\cdot\dot{\mathbf{x}}_j &= -v_j+\varphi\frac{F_r(v_j)}{m_j}-\varphi u_j.\label{eq:delz3xj}
\end{align}
Substituting \eqref{eq:z3},\eqref{eq:delz3xi}, and \eqref{eq:delz3xj} in \eqref{eq:psi_1} yields
\begin{align}\label{eq:psi_1Plugged}
    &\psi_1(\mathbf{x}_i,\mathbf{x}_j,u_j, F_r(v_j)) = -v_i-v_j+\varphi\frac{F_r(v_j)}{m_j}-\varphi u_j\notag\\
    &+\lambda_5[p_i^n - p_i + p_j^n-p_j - \gamma - \varphi\cdot v_j].
\end{align}

Next, we derive the full time derivative of \eqref{eq:psi_1Plugged} in order to construct $\psi_2(\mathbf{x}_i,\mathbf{x}_j)$ in \eqref{eq:psis},
\begin{align}\label{eq:psi_1dot}
    &\dot{\psi}_1(\mathbf{x}_i,\mathbf{x}_j) = \nabla_{\mathbf{x}_i}\psi_1\cdot\dot{\mathbf{x}}_i + \nabla_{\mathbf{x}_j}\psi_1\cdot\dot{\mathbf{x}}_j+\frac{\partial \psi_1}{\partial u_j}~\dot{u}_j  \notag\\
    &+\frac{\partial \psi_1}{\partial F_r(v_j)}\frac{\partial F_r(v_j)}{\partial v_j}\dot{v}_j,
\end{align}
where 
\begin{align}
    \nabla_{\mathbf{x}_i}\psi_1\cdot\dot{\mathbf{x}}_i &= -\lambda_5 v_i + \frac{F_r(v_i)}{m_i}-u_i, \label{eq:delPsi_1}\\
    \nabla_{\mathbf{x}_j}\psi_1\cdot\dot{\mathbf{x}}_j &=-\lambda_5 v_j+\frac{F_r(v_j)}{m_j} + \lambda_5\varphi\frac{F_r(v_j)}{m_j}\notag\\&-u_j-\lambda_5u_j,\\
    \frac{\partial \psi_1}{\partial u_j}~\dot{u}_j &=-\varphi~\dot{u}_j,\\
    \frac{\partial \psi_1}{\partial F_r(v_j)}\frac{\partial F_r(v_j)}{\partial v_j}\dot{v}_j &= \frac{\varphi}{m_j}(\beta_1+2\beta_2v_j)\cdot(-\frac{F_r(v_j)}{m_j}+u_j).\label{eq:delPsi_2}
\end{align}
By substituting \eqref{eq:psi_1dot}-\eqref{eq:delPsi_2} into \eqref{eq:psis}, we derive $\psi_2(\mathbf{x}_i,\mathbf{x}_j)$ from \eqref{eq:psis}, which can then be used to construct the condition for the control input $u_i$ based on \eqref{eq:cbfCoupledSafeRelativeDegree2}, and the proof is complete.  
\end{proof}

As described in Section \ref{sec:MP}, to guarantee the lateral safety between CAV $i\in\mathcal{N}(t)$ and CAV $j\in\mathcal{N}(t)$ at conflict point $n\in\mathcal{O}$, either \eqref{eq:lateralAfter} or \eqref{eq:lateralBefore} must be satisfied. Thus, depending on the the arrival time at conflict point $n$ for CAV $i$ and $j$ ($t_i^n$ and $t_j^n$, respectively), we must satisfy \eqref{eq:cbfLateralAfter} or \eqref{eq:cbflateralBefore} as follows
\begin{equation}\label{eq:CBFLateral}
      \left\{ \begin{array}{ll}
u_i\leq A, \quad \text{if}~~t_i^n>t_j^n\\
u_i\leq B, \quad \text{if}~~t_i^n<t_j^n
           \end{array}, \right. 
\end{equation}
where 
\begin{align}
    A &= \frac{1}{\varphi} \left[\lambda_4(s_i^n+s_j^n -\delta_i ) -(v_i+v_j)\right]+\frac{F_r(v_i)}{m_i},\\
    B &= \lambda_5(v_i+v_j)+\frac{F_r(v_i)}{m_i}+\frac{F_r(v_j)}{m_j}-\frac{\varphi \beta_1 F_r(v_j)}{m_j^2}\nonumber\\
&-\frac{2\varphi \beta_2 v_j F_r(v_j)}{m_j^2}+(\frac{\varphi \beta_1+2\varphi \beta_2 v_j}{m_j}-\lambda_5\varphi-1)u_j\nonumber\\
&-\varphi \dot{u}_j + \lambda_6~\psi_1.
\end{align}

 Next, we formulate an optimization problem based on QP for our barrier-certificate module. This QP can be solved at discrete time step to verify the reference control input $u^{ref}_i(t)$, resulting from the vehicle-level tracking controller. In case of a potential violation, QP minimally modifies the control input to guarantee the satisfaction of all  constraints.    

\begin{problem}\label{prb:QPProblem}
Each CAV $i\in\mathcal{N}(t)$ at time $t$ observes its state $\mathbf{x}_i$ and accesses the states and control inputs, $\mathbf{x}_j$ and $u_j$, respectively, of neighbour CAVs. Then, $i$ solves the following optimization problem to find the safe control input.
\begin{align}\label{eq:QP}
     u^*_i(t)=&~\underset{u_i(t)}{\arg\min}~ \frac{1}{2}\lVert u_i(t) - u^{ref}_i(t) \rVert^2 \\
    \emph{subject to: }& \notag\\
    & \eqref{eq:uconstraint},\eqref{eq:CBFVMax},\eqref{eq:CBFVMin},\eqref{eq:cbfRear-end},\eqref{eq:CBFLateral},\notag
\end{align}
where each pertaining constraint \eqref{eq:vconstraint}-\eqref{eq:lateralBefore} for CAV $i$ are mapped to the control input constraint using the appropriate CBFs \eqref{eq:cbfSafe}, \eqref{eq:cbfCoupledSafe}, or \eqref{eq:cbfCoupledSafeRelativeDegree2}. Note that $u^{ref}_i(t)$ is the combined feedforward-feedback control law to track the resulting optimal trajectory from the motion planning module \ref{sec:MP}.
\end{problem}

Since the control input is bounded, the feasibility of the QP in Problem \ref{prb:QPProblem} can be ensured by choosing appropriate $\lambda_q\in\mathbb{R}_{\geq 0}$ for class $\mathcal{K}_{\infty}$ functions $\alpha_q(x)=\lambda_q x$, $q\in \mathbb{N}$. Note that in this paper, we chose linear class $\mathcal{K}_{\infty}$ functions; however, one may decide to choose a different form for their class $\mathcal{K}_{\infty}$. Analyzing and studying the effects of the choice of $\mathcal{K}_{\infty}$ on the control input's feasible space is left for future work.

\section{Simulation Results}\label{sec:results}
To show the performance of our barrier-certified coordination framework, we investigate the coordination of $24$ CAVs at a signal-free intersection shown in Fig.\ref{fig:intersection}. The CAVs enter the control zone from $6$ different paths (Fig. \ref{fig:intersection}) with a total rate of $3600$ veh/hour while their initial speed is uniformly distributed between $12$ m/s and $14$ m/s. We consider the length of the control zone and road width to be $212$ m and $3$ m, respectively.  
The rest of the parameters for the simulation are   $v_{\min}=0.2$ m/s, $v_{\max}=20$ m/s, $u_{\max}=2$ m/s$^2$, $u_{\min}=-2$ m/s$^2$, $\gamma=2.5$ m, $\varphi=0.5$ s $k_p=k_v=1.5$, $\Delta t=0.1$ s. 
We used \texttt{lsqlin} in Matlab to solve Problem \ref{prb:QPProblem} and \texttt{ODE45} to integrate the vehicle dynamics. Videos from our simulation can be found at the supplemental site, \url{https://sites.google.com/view/ud-ids-lab/BCOCF}.

Figs. \ref{fig:controlInput}-\ref{fig:Speed} demonstrate the control input, position, and speed for a selected CAV in the simulation. The blue line in Fig. \ref{fig:controlInput} shows the reference control input from the feedforward-feedback control law \eqref{eq:feedbak-feedforward}, and the dashed red line denotes the resulting optimal control trajectory from the motion planning module. The black line shows the applied control input at each time step resulting from the Solution of Problem \ref{prb:QPProblem}.
It can be seen that around $16.5$ s the barrier-certificate module overrides the reference control input in order to satisfy the speed limit constraint. 
The actual trajectory of the vehicle in Figs. \ref{fig:Position} and \ref{fig:Speed} is computed by integrating the realistic vehicle dynamics \eqref{eq:controlAffine} and applying the solution of Problem \ref{prb:QPProblem} at each time step. Our proposed framework tracks the resulting optimal trajectory from the motion planning module, while it ensures that none of the state, control, and safety constraints becomes active.

\begin{figure}[ht]
    \centering
\includegraphics[width=0.95\linewidth]{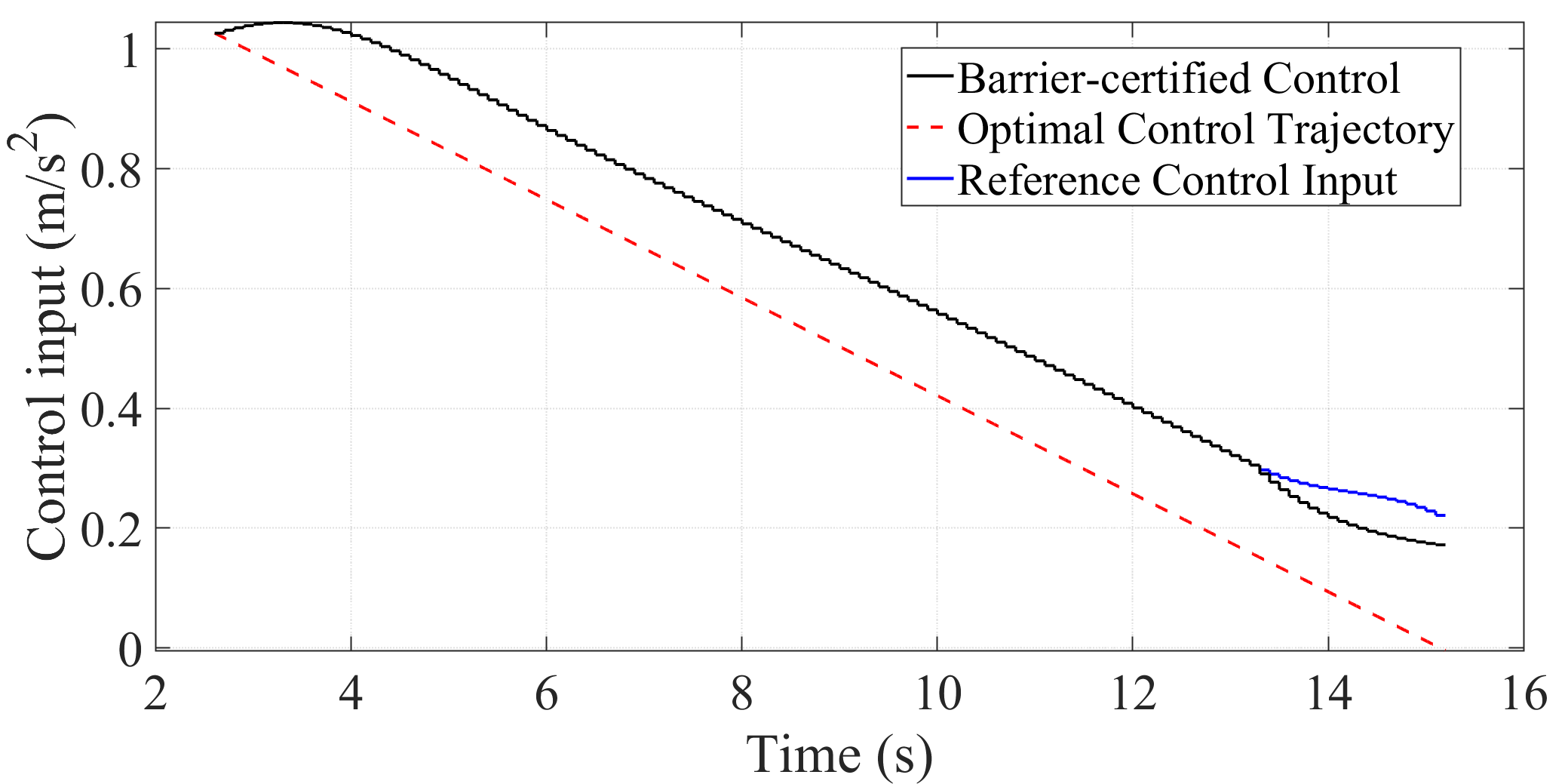}    \caption{Control input for a selected CAV.}
    \label{fig:controlInput}
\end{figure}

\begin{figure}[ht]
    \centering
\includegraphics[width=0.95\linewidth]{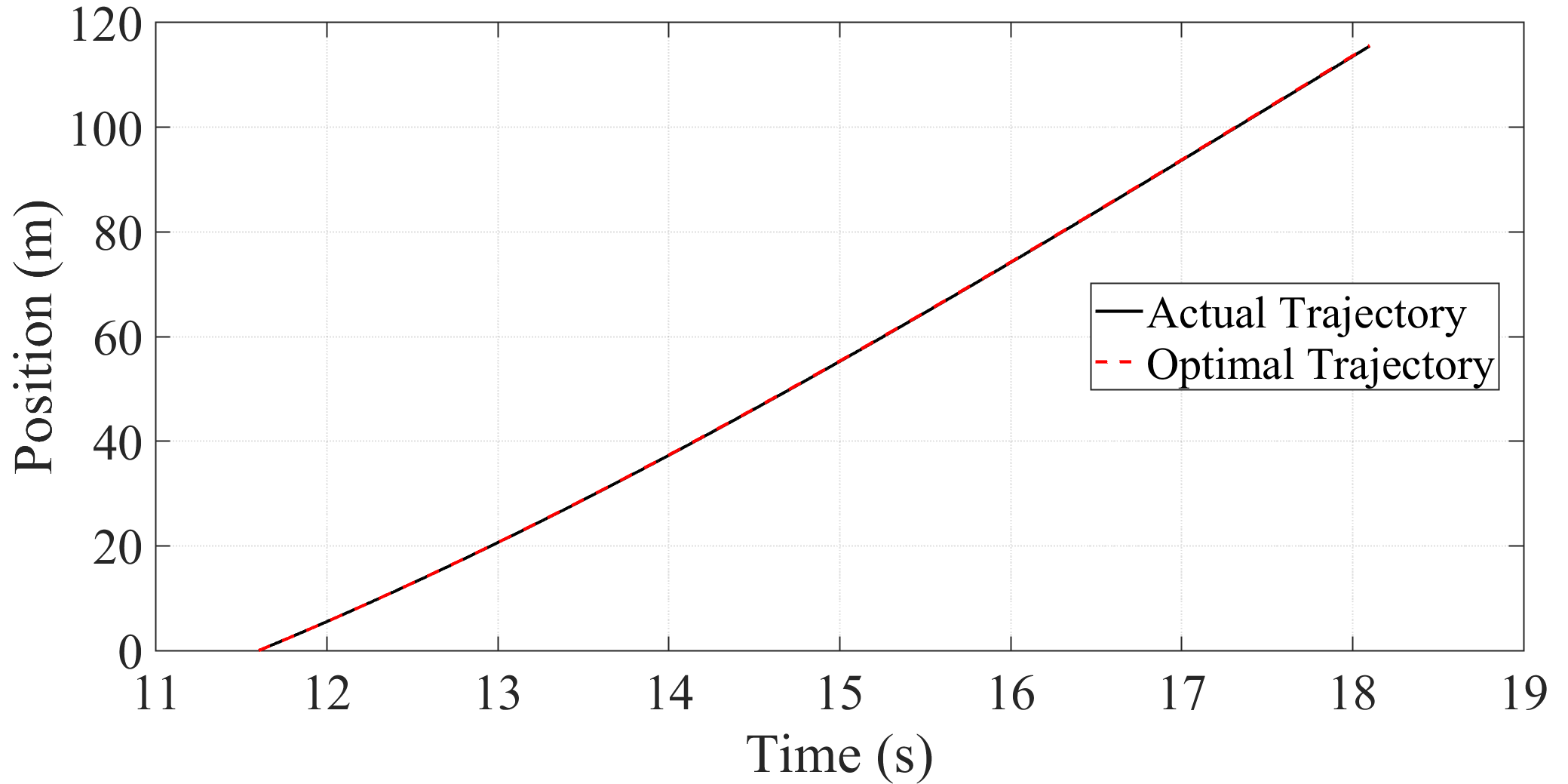}    \caption{Actual and optimal position trajectory for a selected CAV.}
    \label{fig:Position}
\end{figure}

\begin{figure}[ht]
    \centering
\includegraphics[width=0.95\linewidth]{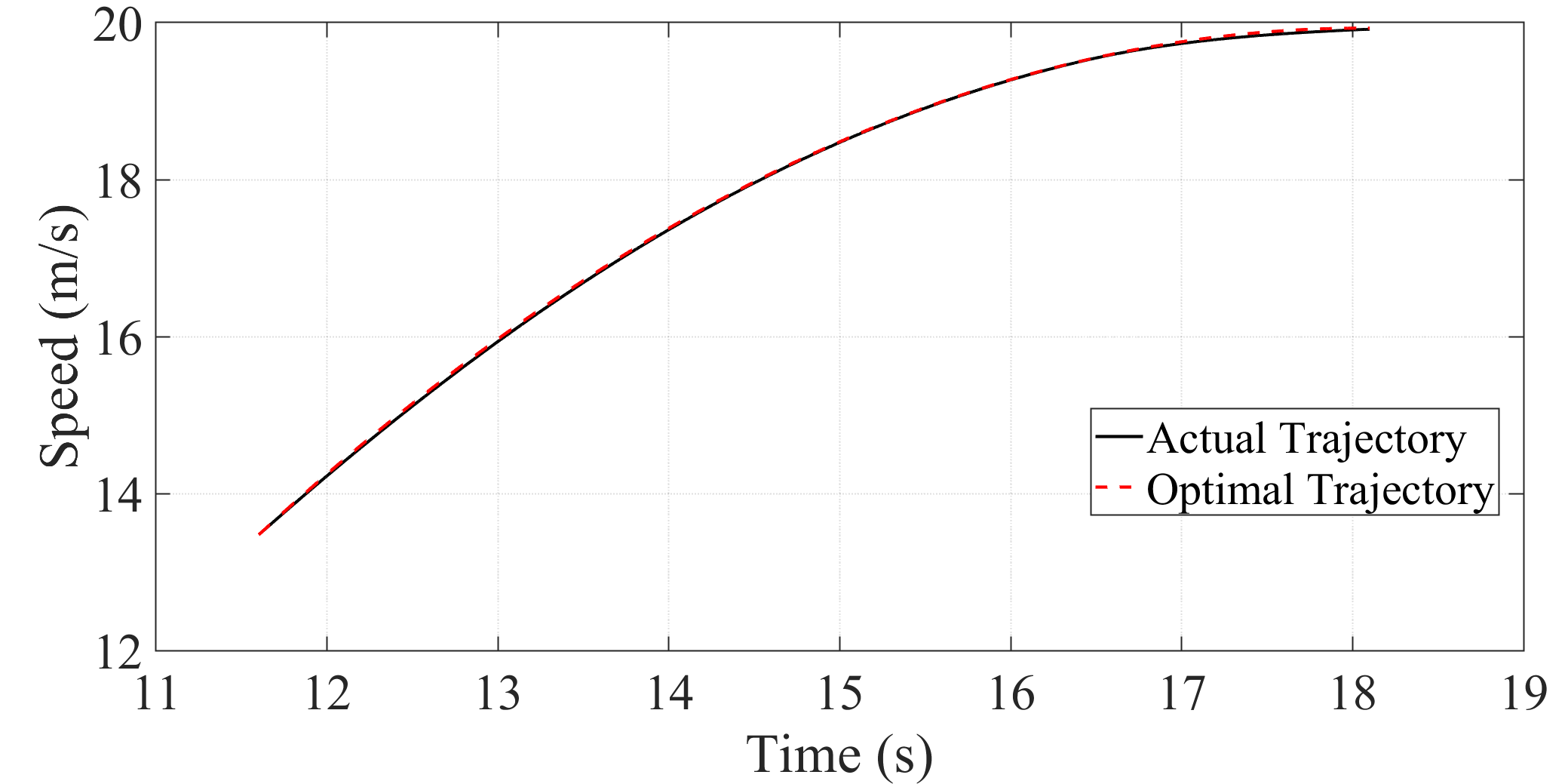}    \caption{Actual and optimal speed trajectory for a selected CAV.}
    \label{fig:Speed}
\end{figure}
The mean and standard deviation of computation times of the motion planning  and barrier-certificate modules in our proposed framework are listed in Table \ref{tbl:computation}. It shows that our framework is computationally feasible.
\begin{table}[ht]
\caption{The mean and standard deviation of computation times for each module.}
\vspace{0.5em}
\centering
\begin{tabular}{c|c|c} \label{tbl:computation}
      &  Mean (s)& Standard deviation (s)
    \\
    \toprule
    Motion planning & $0.029$& $0.0331$ \\
    Barrier-certificate & $0.0063 $&$0.0026$
\end{tabular}
\end{table}

\section{Concluding Remarks and Discussion} \label{sec:Conclusion}

In this paper, we enhanced the motion planning framework for coordination of CAVs introduced in \cite{Malikopoulos2020} through employing CBFs to provide an additional safety layer and ensure satisfaction of all  constraints in the system. By using the proposed framework in the motion planning module, each CAV first uses simple longitudinal dynamics to derive the optimal control trajectory without activating any constraint.
In a real physical system, we require a vehicle-level controller to track the resulting optimal trajectory. However, due to the inherent deviations between the actual trajectory and the planned trajectory, the system's constraints may become active. We addressed this issue by introducing a barrier-certificate module based on a more realistic dynamics as a safety middle layer between the vehicle-level tracking controller and physical vehicle to provide a reactive mechanism to guarantee constraint satisfaction in the system. Future work should validate this framework beyond simulation using a physical system.  

\bibliographystyle{IEEEtran.bst} 
\bibliography{reference/IDS_Publications_03262022.bib, reference/ref.bib,reference/CBF_REF.bib}

\end{document}